\documentclass[11pt,a4paper]{amsart}
\usepackage[top=4cm,bottom=4cm,left=3.5cm,right=3.5cm]{geometry}

\usepackage{amsmath,amssymb,amsthm}
\usepackage[english]{babel}
\usepackage{mathtools}
\usepackage{enumerate,caption}
\usepackage[numbers]{natbib}
\usepackage[hidelinks]{hyperref}
\usepackage{tikz}
\usepackage{tkz-graph}
\usepackage{tikz-cd}
% if you want to change the arrow heads. Look into the manual for more information
%\tikzcdset{%
%   arrow style=tikz,
%   diagrams={>=stealth}
%}
\tikzset{%
    symbol/.style={%
        ,draw=none
        ,every to/.append style={%
            edge node={node [sloped, allow upside down, auto=false]{$#1$}}}
    }
}
\usepackage{xcolor}
\usepackage{algorithm}
\usepackage[noend]{algpseudocode}

\DeclareMathOperator{\Mat}{Mat}
\DeclareMathOperator{\Cen}{Cen}

\DeclareMathOperator{\End}{End}

\newcommand{\vN}{\mathbb{N}}
\newcommand{\vZ}{\mathbb{Z}}

\newcommand{\epm}{E_p^{(m)}}
\newcommand{\fpm}{F_p^{(m)}}
\newcommand{\zpm}{\vZ/p^{m}\vZ}

\newtheorem{theorem}{Theorem}
\newtheorem{protocol}[theorem]{Protocol}
\newtheorem{lemma}[theorem]{Lemma}
\newtheorem{proposition}[theorem]{Proposition}
\newtheorem{corollary}[theorem]{Corollary}
\theoremstyle{definition}

\newtheorem{definition}[theorem]{Definition}
\theoremstyle{remark}
\newtheorem{remark}[theorem]{Remark}

\usepackage{todonotes}

\numberwithin{equation}{section}

\begin{document}

\title[On the algebraic structure of $\epm$ and applications to cryptography]{On the algebraic structure of $\epm$ and applications to cryptography}

\author[K. Khathuria]{Karan Khathuria}
\address{Institute of Mathematics\\
University of Zurich\\
Winterthurerstrasse 190\\
8057 Zurich, Switzerland\\
}
\email{karan.khathuria@math.uzh.ch}

\author[G. Micheli]{Giacomo Micheli}
\address{Mathematical Institute\\
University of Oxford \\
Woodstock Rd,\\ 
Oxford OX2 6GG, United Kingdom
}
\email{giacomo.micheli@maths.ox.ac.uk}

\author[V. Weger]{Violetta Weger}
\address{Institute of Mathematics\\
University of Zurich\\
Winterthurerstrasse 190\\
8057 Zurich, Switzerland\\
}
\email{violetta.weger@math.uzh.ch}

\thanks{The second author is thankful to Swiss National Science Foundation grant number 171248.}
%%%    General info
\subjclass[2010]{}

\keywords{Finite Fields; Cryptography.}

\begin{abstract}
In this paper we show that the $\zpm$-module structure of the ring $\epm$ is isomorphic to a $\zpm$-submodule of the matrix ring over $\zpm$. Using this intrinsic structure of $\epm$, solving a linear system over $\epm$ becomes computationally equivalent to solving a linear system over $\zpm$. As an application we break the protocol based on the Diffie-Hellman Decomposition problem and ElGamal Decomposition problem over $E_p^{(m)}$. Our algorithm terminates in a provable running time of $O(m^{6})$ $\zpm$-operations. %The main idea consists of producing an isomorphism from $E_p^{(m)}$ to an another structure where the problem becomes easier, showing an intrinsic weakness of $\epm$.
\end{abstract}

\maketitle

\section{Introduction}
\label{sec:introduction}
Public key cryptosystems are often based on number theoretical problems, such as integer factorization as in RSA \cite{RSA} or the discrete logarithm problem over finite fields or over elliptic curves. The latter is the base for well known protocols, such as  the ElGamal protocol \cite{elgamal}  or the Diffie-Hellman  key exchange protocol \cite{diffie}. Increasing computing powers threatens these classical cryptographic schemes and new ambient spaces are demanded, for example involving noncommutative structures (see \cite{anshel1999algebraic, ko2007towards, ko2000new, sakalauskas2003basic, sidel1994systems}). 
In nonabelian  groups there are two main problems which give rise to cryptographic schemes; the semigroup action problem (SAP) \cite{maze2005public}, and the decomposition problem (DP). For an overview see \cite{groupcrypto, decomposition}.
These two problems are very similar:  in the SAP one is given a finite semigroup $S$ acting on a finite set $A$, for $x,y \in A$, such that there exists an $s \in S$ with $y=s \star x$, one wants to find $t \in S$, such that $y=t \star x$.
Whereas in the DP one is given a nonabelian group $G$, $(x,y) \in G \times G$ and  $S \subset G$, one wants to find $z_1, z_2 \in S$, such that $y=z_1xz_2$.

Based on these two problems J.J. Climent and J.A. L{\'o}pez-Ramos proposed three protocols in \citep{climent2016public} over a special ring of matrices involving operations modulo different powers of the same prime, called $E_p^{(m)}$. Similar cryptosystems can be found in \citep[Example 4.3.c]{doi:10.1142/S0219498817501481}. This ring is a generalization of the ring $E_p$, Climent, Navarro and Tartosa introduced in \cite{ClimentEp}. 
The first cryptographic scheme based on $E_p$ \cite{climent2012key}, was broken in \cite{Kamal2012}. This attack can be prevented by admitting only few invertible elements, as it is the case in the ring $E_p^{(m)}$  \cite[Corollary 1]{climent2014extension}. In addition, another nice property of such rings is that they do not admit embeddings into matrix rings over a field (see \cite{Bergman1974}), which is often the main problem of cryptographic schemes over matrix rings (see for example \cite{micheli2015cryptanalysis}) and it prevents a reduction to small extensions of finite fields  as in \cite{menezes1997discrete}.

The first protocol proposed in \citep{climent2016public}  by Climent and  L{\'o}pez-Ramos based on the semigroup action problem over the ring $E_p^{(m)}$ was broken by Micheli and Weger  in \cite{micheli2018cryptanalysis} using a solution sieve argument.
The remaining two protocols proposed  in \cite{climent2016public} are based on the decomposition problem over  $E_p^{(m)}$ and happen to be equivalent. They will be denoted by the Diffie-Hellman Decomposition Problem (DHDP) and the ElGamal  Decomposition Problem  (EGDP). 
A cryptanalysis of these two protocols was  considered by Zhang in  \cite{zhang}, where the Cayley-Hamilton Theorem is used to derive a linear system over $\epm$. However, even though the main idea is correct, the system over $\epm$ is then  directly  considered over $\zpm$, where the system not necessarily  admits a solution, as we will show in an example. The running time of the  claimed attack is  $\mathcal{O}(m^7)$ $\zpm$-operations.

In this paper we introduce a new approach for solving the decomposition problem over $\epm$, where we consider a  $\zpm$-module isomorphic to $\epm$. The implication is twofold: on one hand, we fix the issue of the inconsistent system (see also beginning of Section \ref{sec3} for additional details and an example where the attack in \cite{zhang} fails in Subsection \ref{example}), and on the other hand, we speed up the attack. In particular, this new approach solves the decomposition problem in $\mathcal{O}(m^6)$  $\zpm$-operations.

%In order to break the protocols, We first  use the Cayley-Hamilton Theorem in a similar way, to get a linear system of equations over $\epm$, and second we introduce a new apporach of solving linear systems over $\epm$, instead of using the solution sieve argument (as in \cite{micheli2018cryptanalysis}). In the new approach, we consider a  $\zpm$-module isomorphic to $\epm$ where the system can be solved faster. This results in breaking the protocols  in $\mathcal{O}(m^6) \zpm$-operations. 

This paper is organized as follows: in Section \ref{sec2} we recall the definitions and properties of the ring  $E_p^{(m)}$ and state the DHDP and EGDP protocols over  $E_p^{(m)}$. In Section \ref{sec3} we present the practical attack on the DHDP protocol, which in turn will also break the equivalent EGDP protocol.
%We break the scheme for some proposed instances of the protocol.   
In Subsection \ref{example} we show in an example how the attack works.

\subsection{Notation}
Let $T$ be a subset of a (possibly non-commutative) ring $S$. We will denote the centralizer of $T$ by 
\[\Cen(T)=\{U\in S \ \mid \  UR=RU \ \forall \ R \in T\}.\]
When $T=S$, then $\Cen(S)$ is said to be the center of $S$ and will be denoted by $Z(S)$.
Let $\mathbb{N}$ denote the natural numbers, i.e. $\mathbb{N} =\{1,2, \ldots\}$ and $\mathbb{N}_0 = \mathbb{N} \cup \{0\}$.
For any commutative ring $R$, and any two positive integers $k,m\in \vN$ we will denote by $\Mat_{k\times m}(R)$ the set of $k$ by $m$ matrices with coefficients in $R$. 
If $M$ is an abelian group and  $R$ is a ring acting on $M$, we denote by $\End_R(M)$ the set of endomorphisms of $M$ as an $R$-module. Notice that $\End_R(M)$ has a natural ring structure.
Let $\phi\in \End_R(M)$, we denote by $R[\phi]$ the smallest subring of $\End_R(M)$ which contains $R$ and $\phi$.

\section{The ring $E_p^{(m)}$ and the decomposition problem}\label{sec2}

Let us recall the definition of the matrix ring $E_p^{(m)}$ and its center, which were first introduced in \citep[Theorem 1]{climent2014extension}.

\begin{definition}\label{definitionepm}
Let $E_p^{(m)} $ be the following set of matrices.
\begin{equation*}
E_p^{(m)} = \left\{(a_{ij})_{i,j\in \{1,\dots, m\}} \ \mid \ a_{ij} \in \mathbb{Z}/p^i\mathbb{Z} \ \text{if} \ i \leq j, \ \text{and} \ a_{ij} \in p^{i-j}\mathbb{Z}
/p^i\mathbb{Z} \ \text{if} \ i >j\right\}.
\end{equation*}
To shorten the notation we will write $[a_{ij}]=(a_{ij})_{i,j\in \{1,\dots, m\}}$.
This set forms a ring with the addition and multiplication defined, respectively, as follows
%unitary, noncommutative
\begin{align*}
[a_{ij}] + [b_{ij}]  & =  \left[(a_{ij}+b_{ij}) \ \text{mod} \ p^i \right], \\
[a_{ij}] \cdot [b_{ij}]  & =  \left[\left( \sum_{k=1}^m a_{ik} b_{kj} \right) \ \text{mod} \ p^i \right].
\end{align*}
\end{definition} 

Let us denote by $V$ the set  $\mathbb{Z}/p\mathbb{Z} \times \cdots \times  \mathbb{Z}/p^m\mathbb{Z}$.
The ring $E_p^{(m)} $ acts on $V$  by the usual matrix multiplication.

\begin{theorem}\citep[Theorem 2]{climent2016public} \label{theorem:center}
The center $Z\left(E_p^{(m)}\right)$ of  $E_p^{(m)} $  is given by the set
\begin{equation*}
\left\{ [a_{ij}] \in E_p^{(m)} \ \bigm| \ a_{ii} = \sum_{j=0}^{i-1} p^j u_j, \ \text{with} \ u_j \in \{0, \ldots, p-1\} \ \text{and} \ a_{ij} = 0 \ \text{if} \ i \neq j \right\}. 
\end{equation*}
\end{theorem}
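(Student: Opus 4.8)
The plan is to establish the two inclusions separately, after unwinding the right-hand side. Write $C$ for that set. The condition on the diagonal entries---a fixed sequence of base-$p$ digits $u_0,\dots,u_{m-1}$, truncated after $i$ terms to produce $a_{ii}$---together with $a_{ij}=0$ for $i\neq j$ says precisely that $C$ consists of the diagonal matrices $D_c=\operatorname{diag}(c\bmod p,\ c\bmod p^2,\ \dots,\ c\bmod p^m)$, one for each $c\in\zpm$; equivalently, $D_c$ is the diagonal matrix whose $i$-th entry is the image of $c$ in $\vZ/p^i\vZ$, so that $(D_c)_{ii}\equiv(D_c)_{i+1,i+1}\pmod{p^i}$ for all $i$. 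Thus the theorem amounts to the equality $Z(\epm)=\{D_c\mid c\in\zpm\}$.

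For the inclusion $C\subseteq Z(\epm)$ I would simply compute. Fix $c$, lift it to an integer, and take an arbitrary $A=[a_{ij}]\in\epm$. Since $D_c$ is diagonal, $(D_cA)_{ij}\equiv (D_c)_{ii}\,a_{ij}\equiv c\,a_{ij}\pmod{p^i}$, while $(AD_c)_{ij}\equiv a_{ij}\,(D_c)_{jj}\pmod{p^i}$ with $(D_c)_{jj}=c-p^j t$ for some $t\in\vZ$; hence the two entries differ by $p^j t\,a_{ij}$ modulo $p^i$. If $i\le j$ this vanishes because $p^j\equiv 0\pmod{p^i}$; if $i>j$ it vanishes because the defining condition of $\epm$ gives $p^{i-j}\mid a_{ij}$, so $p^j a_{ij}\equiv 0\pmod{p^i}$. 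This is the one place where the asymmetric divisibility imposed below the diagonal in Definition~\ref{definitionepm} is used. Therefore $D_cA=AD_c$, and $D_c$ is central.

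For the reverse inclusion $Z(\epm)\subseteq C$ I would test an arbitrary central $B=[b_{ij}]$ against two small explicit families of elements of $\epm$. First, for each $k$ the matrix $U_k$ having a $1$ in position $(k,k)$ and zeros elsewhere lies in $\epm$; comparing $BU_k$ with $U_kB$ forces $b_{ik}=0$ for $i\neq k$ and $b_{kj}=0$ for $j\neq k$, and letting $k$ run over $1,\dots,m$ shows $B$ is diagonal, say $B=\operatorname{diag}(b_1,\dots,b_m)$ with $b_i\in\vZ/p^i\vZ$. Second, the superdiagonal shift $N$ with $N_{i,i+1}=1$ for $1\le i\le m-1$ and zeros elsewhere also lies in $\epm$, since each of its nonzero entries sits strictly above the diagonal; comparing the $(i,i+1)$ entries of $BN$ and $NB$, both reduced modulo $p^i$, gives $b_i\equiv b_{i+1}\pmod{p^i}$ for $1\le i\le m-1$. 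Writing the base-$p$ expansion $b_m=\sum_{j=0}^{m-1}p^j u_j$, these coherence relations force $b_i=b_m\bmod p^i=\sum_{j=0}^{i-1}p^j u_j$, i.e.\ $B=D_{b_m}\in C$.

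The main obstacle is just the modular bookkeeping: every entry in position $(i,j)$ is read modulo a different power $p^i$, so one must stay consistent about which reduction is taken when multiplying matrices, and one must check that the test matrices $U_k$ and $N$ (and the matrices $D_c$ from the first inclusion) genuinely satisfy the support constraints of Definition~\ref{definitionepm}. Once that is handled, the observation that these few test elements already pin down both diagonality and the coherence of the diagonal makes the rest routine, with no further elements of $\epm$ needed.
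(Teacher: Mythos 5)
Your argument is correct. Note that the paper does not prove this statement at all: it is quoted with a citation to Theorem 2 of the Climent--L\'opez-Ramos paper, so there is no in-paper proof to compare against. Your self-contained verification checks out in both directions: the centrality of $D_c$ is exactly where the below-diagonal condition $p^{i-j}\mid a_{ij}$ is needed (so that $p^j a_{ij}\equiv 0 \pmod{p^i}$ when $i>j$), the test matrices $U_k$ and the shift $N$ do satisfy the support constraints of Definition~\ref{definitionepm}, and comparing $BU_k$ with $U_kB$ and the $(i,i+1)$ entries of $BN$ and $NB$ (each read modulo $p^i$) correctly forces diagonality and the coherence $b_i\equiv b_{i+1}\pmod{p^i}$, which is precisely the truncated base-$p$ description in the statement.
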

For $M \in E_p^{(m)}$, let us denote by $\Cen(M)$ the centralizer of $M$, i.e. the set of elements $X \in E_p^{(m)}$, such that $XM =MX$. Define the set \[H(M) = \left \lbrace \sum_{i=0}^k C_iM^i \ \bigm| \ C_i \in Z(\epm), k \in \mathbb{N} \right \rbrace. \] 

Let us recall the Diffie-Hellman decomposition problem, proposed in \citep[Example 3]{decomposition}.

\begin{definition}[DH Decomposition Problem (DHDP)]
Let $G$ be a semigroup, $A, B \subseteq G$ be two subsemigroups such that $ab = ba$ for every $a \in A$ and $b \in B$ and assume that $x \in G$. Given two elements $a_1xa_2$ and $b_1xb_2$, with $a_1, a_2 \in A$ and $b_1, b_2 \in B$, find the element $a_1b_1xb_2a_2$.
\end{definition}
%\vw[inline]{Need reference to DHDP and need EGDP}

In \citep{climent2016public}, Climent and L{\'o}pez-Ramos proposed two protocols based on the decomposition problem over $\epm$, one of the protocols is a Diffie-Hellman key exchange and the other one is an ElGamal protocol, both analogous to the Diffie-Hellman key exchange \citep{diffie} and the ElGamal cryptosystem \cite{elgamal} respectively.

\begin{protocol}[DHDP protocol] \label{protocol:DHDP} Alice and Bob agree on two public elements $M,X \in E_p^{(m)}$ such that $M \notin \Cen(X)$.
\begin{itemize}
\item[1.] Alice chooses $A_1,A_2 \in H(M)$ and sends $G_A = A_1XA_2$ to Bob.
\item[2.] Bob chooses $B_1,B_2 \in \Cen(M)$ such that $B_1X \neq XB_2$ and sends $G_B = B_1 X B_2 $ to Alice.
\item[3.] Alice computes $A_1 G_BA_2$.
\item[4.] Bob computes $B_1 G_A B_2$.
\end{itemize}
\end{protocol}

Since $A_i$ and $B_i$ commute for all $i \in \{1,2\}$, it is clear that Alice and Bob share a common value.

\begin{protocol}[EGDP protocol]\label{protocol:EGDP} Alice and Bob agree on a public element $M \in E_p^{(m)}$. Let $S \in \epm$ be the secret that Bob wants to send Alice.
\begin{itemize}
\item[1.] Alice chooses $N \in \epm$ such that $NM \neq MN$ and two elements $A_1,A_2 \in H(M)$ and publishes her public key $(N,A_1NA_2)$.
\item[2.] Bob chooses randomly two elements $B_1,B_2 \in \Cen(M)$ and sends $(F,D)=(B_1 N B_2,S + B_1A_1NA_2B_2)$ to Alice.
\item[3.] Alice recovers $S$ by computing $D - A_1 F A_2$.
\end{itemize}
\end{protocol}

Since  $A_i$ and $B_i$ commute for all $i \in \{1,2\}$ we have that $$D - A_1 F A_2 = S+B_1A_1NA_2B_2 - A_1B_1NB_2A_2 =S.$$

As observed in \citep[Theorem 4]{climent2016public}, breaking the EGDP protocol  is equivalent to breaking the DHDP protocol.

\section{Solving the decomposition problem in $\mathcal{O}(m^6)$ $ \zpm$ operations }\label{sec3}
In this section we provide an algorithm to solve the decomposition problem over $\epm$ and therefore to break the DHDP protocol. 
It is worth mentioning that in \citep{zhang} the author claims to have an attack that runs with $\mathcal{O}(m^7)$ $\zpm$-operations. Even though the main idea of the attack is correct (i.e. reducing the problem to solving a system of linear equations over $\epm$), it presents an issue when building the actual linear equations. In fact, the equations have different moduli, depending on which row is considered, so the claim that it is enough to solve the system over $\vZ/p^m\vZ$ is incorrect  (we provide an explicit example where the attack fails in Subsection \ref{example}).  
Moreover, the aforementioned attack would run in $\mathcal{O}(m^7) $ $\zpm$-operations,  instead our attack runs in $\mathcal{O}(m^6)$ $\zpm$-operations, reducing the complexity of the DHDP.
As mentioned in Protocol \ref{protocol:DHDP} and Protocol \ref{protocol:EGDP}, the two subgroups used are $H(M)$ and $\Cen(M)$ for a publicly known $M \in \epm$.
\begin{lemma}
The center of the ring $\epm$ is isomorphic to $\vZ/p^m\vZ$ as rings. \label{lemma:center}
\end{lemma}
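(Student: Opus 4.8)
The plan is to write down an explicit ring homomorphism $\zpm \to Z(\epm)$ and show it is a bijection, using the description of the center provided by Theorem~\ref{theorem:center}. First I would send $c \in \zpm$ to the diagonal matrix whose $(i,i)$ entry is the reduction of $c$ modulo $p^i$; by Theorem~\ref{theorem:center} every element of $Z(\epm)$ is diagonal with the $(i,i)$ entry a lift of the form $\sum_{j=0}^{i-1} p^j u_j$ with $u_j \in \{0,\dots,p-1\}$, and such an entry is exactly the unique representative in $\{0,\dots,p^i-1\}$ of a residue mod $p^i$ — moreover the compatibility condition (the $(i,i)$ entry is a truncation of the $(i+1,i+1)$ entry in base $p$) says precisely that all these residues lift a single residue mod $p^m$. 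So the map is well defined and surjective onto $Z(\epm)$, and injective because the $(m,m)$ entry already recovers $c$.

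Next I would check that this map respects the ring operations. This is essentially immediate from the definition of addition and multiplication in $\epm$ (Definition~\ref{definitionepm}): for diagonal matrices the product is diagonal with $(i,i)$ entry $a_{ii}b_{ii} \bmod p^i$, and reduction $\zpm \to \vZ/p^i\vZ$ is a ring homomorphism, so the assignment $c \mapsto (c \bmod p^i)_i$ is compatible with both $+$ and $\cdot$; the multiplicative identity of $\zpm$ goes to the identity matrix of $\epm$. Hence we have a ring isomorphism $\zpm \xrightarrow{\ \sim\ } Z(\epm)$.

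The only genuinely non-formal point — and the one I would state carefully — is that the entries appearing in Theorem~\ref{theorem:center}, namely $a_{ii} = \sum_{j=0}^{i-1} p^j u_j$ with $u_j \in \{0,\dots,p-1\}$, together with the implicit requirement that these be consistent across $i$ (so that one genuinely gets a two-sided identity-preserving section rather than an unrelated family of residues), are in bijection with $\zpm$ via $c \mapsto (c \bmod p^i)_{1\le i\le m}$. This is just the statement that a residue mod $p^m$ is determined by, and determines, its $p$-adic digit expansion truncated at each level; I expect this digit-bookkeeping to be the main thing to verify, and everything else to follow formally.
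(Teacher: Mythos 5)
Your proposal is correct and is essentially the paper's own argument: the paper defines exactly the same map $\psi:\zpm \to Z(\epm)$, $z \mapsto [a_{ij}]$ with $a_{ii} = z \bmod p^i$ and $a_{ij}=0$ for $i\neq j$, and declares it an isomorphism, while you simply spell out the bijectivity via the digit/truncation bookkeeping from Theorem~\ref{theorem:center} and the compatibility of reduction maps with the ring operations. No gaps; your added verification is just the detail the paper leaves as ``easy to see.''
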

\begin{proof}

It is easy to see that the following map is a ring isomorphism

\begin{eqnarray*}
\psi: \mathbb{Z}/ p^m \mathbb{Z} & \to &  Z(E_p^{(m)}) \\
z & \mapsto & [a_{ij}],
\end{eqnarray*}
where $a_{ii} = z \mod p^i$ and $a_{ij}=0$ for $i \neq j$. 
\end{proof}
A direct generalization of Theorem 5 in \cite{climent2011arithmetic} shows that if one considers     $\mathbb{Z}/p\mathbb{Z} \times \cdots \times \mathbb{Z}/p^m\mathbb{Z}$ as a $\mathbb Z$-module, then $E_p^{(m)}$ is isomorphic to  $\End_{\mathbb Z}(\mathbb{Z}/p\mathbb{Z} \times \cdots \times \mathbb{Z}/p^m\mathbb{Z})$. Using this fact and the Cayley-Hamilton Theorem, we can prove that the subring generated by a matrix in $E_p^{(m)}$ is a finite dimensional $\mathbb Z$-module. 
To see this in detail, let us now recall the general statement of the Cayley-Hamilton Theorem.

\begin{theorem}\citep[Proposition 2.4]{atiyah}\label{thm:cayleyhamiltonrings}
Let $R$ be a ring, let $M$ be a finitely generated $R$-module, let $\phi: M \to M$ be a module morphism and let $I \subset R$ be an ideal of $R$, such that $\phi(M) \subseteq IM$. Let $n \in \mathbb{N}$ be the number of elements needed to generate $M$. Then there exist $a_{n-1}, \ldots, a_0 \in I$, such that 
\begin{equation*}
\phi^n + a_{n-1}\phi^{n-1} + \cdots a_0 =0.
\end{equation*}
\end{theorem}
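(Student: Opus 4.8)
The plan is to run the classical \emph{determinant trick}. First I would fix a generating set $x_1,\dots,x_n$ of $M$ as an $R$-module, with $n$ as in the statement. Since $\phi(M)\subseteq IM$, each $\phi(x_i)$ lies in $IM$, so I can write $\phi(x_i)=\sum_{j=1}^n a_{ij}x_j$ for suitable $a_{ij}\in I$. Rewriting this as $\sum_{j=1}^n(\delta_{ij}\phi-a_{ij})x_j=0$ for every $i$, the content is that the matrix $\Phi=(\delta_{ij}\phi-a_{ij})_{i,j}$, whose entries live in the subring $R[\phi]\subseteq\End_R(M)$, annihilates the column $(x_1,\dots,x_n)^\top$.

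The key point is that $R[\phi]$ is commutative: because $\phi$ is $R$-linear, multiplication by any $r\in R$ commutes with $\phi$ inside $\End_R(M)$, so $R$ and $\phi$ generate a commutative subring, and in particular the $a_{ij}$ commute with $\phi$. Hence Cramer's rule is available over $R[\phi]$. Multiplying the relation $\sum_j\Phi_{ij}x_j=0$ on the left by the adjugate matrix $\operatorname{adj}(\Phi)$ and using $\operatorname{adj}(\Phi)\,\Phi=\det(\Phi)\,\mathrm{Id}_n$ yields $\det(\Phi)\,x_j=0$ for every $j$. Since the $x_j$ generate $M$, the element $\det(\Phi)\in R[\phi]$ annihilates all of $M$, i.e.\ $\det(\Phi)=0$ in $\End_R(M)$. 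It then remains to expand $\det(\Phi)=\det(\phi\,\mathrm{Id}_n-(a_{ij}))$ as a polynomial in $\phi$: the leading term is $\phi^n$, and for $k\ge 1$ the coefficient of $\phi^{n-k}$ is, up to sign, a sum of $k\times k$ minors of $(a_{ij})$, hence a sum of products of at least one entry from $I$, so it lies in $I$ because $I$ is an ideal. Calling these coefficients $a_{n-1},\dots,a_0$ gives $\phi^n+a_{n-1}\phi^{n-1}+\cdots+a_0=\det(\Phi)=0$, as claimed.

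The only genuinely delicate step is the passage through a possibly noncommutative ring: $\End_R(M)$ need not be commutative, so one must be scrupulous to carry out all determinant and adjugate manipulations inside the commutative subring $R[\phi]$, and to verify that the ``scalars'' $a_{ij}$ really do commute with $\phi$ there (which is exactly the $R$-linearity of $\phi$). Everything else is routine Cayley--Hamilton bookkeeping, and since the cited reference works with commutative $R$ throughout, in our application $R=\mathbb{Z}$ the commutativity subtlety does not even arise.
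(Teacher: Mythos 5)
Your argument is correct: it is the standard determinant-trick proof, which is exactly the proof of the cited result (Atiyah--Macdonald, Proposition 2.4); the paper itself gives no proof, only the citation. Your care about working inside the commutative subring $R[\phi]$ (and the remark that the statement implicitly needs $R$ commutative, harmless here since $R=\mathbb{Z}$) is accurate and matches the classical treatment.
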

We now prove the corollary we are interested in.
\begin{corollary}
For every $A \in E_p^{(m)}$, there exists $a_0, \ldots, a_{m-1} \in \mathbb{Z}$, such that \label{cor:cayley}
\begin{equation*}
A^m =  a_0+ a_1 A+ \cdots  a_{m-1} A^{m-1}.
\end{equation*} 
\end{corollary}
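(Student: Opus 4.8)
The plan is to apply the general Cayley--Hamilton theorem (Theorem \ref{thm:cayleyhamiltonrings}) with $R = \mathbb{Z}$, taking $M$ to be the $\mathbb{Z}$-module $V = \mathbb{Z}/p\mathbb{Z} \times \cdots \times \mathbb{Z}/p^m\mathbb{Z}$, and $\phi$ the endomorphism of $V$ induced by multiplication by $A$. The first step is to pin down the relevant identification: as noted in the paragraph preceding Theorem \ref{thm:cayleyhamiltonrings}, a direct generalization of \citep[Theorem 5]{climent2011arithmetic} gives a ring isomorphism $E_p^{(m)} \cong \End_{\mathbb{Z}}(V)$, so each $A \in E_p^{(m)}$ acts on $V$ as a $\mathbb{Z}$-module endomorphism $\phi_A$, and the powers $A^i$ correspond to $\phi_A^i$; moreover a $\mathbb{Z}$-linear relation among the $\phi_A^i$ transports back to the same relation among the $A^i$ in $E_p^{(m)}$, because $Z(E_p^{(m)}) \cong \mathbb{Z}/p^m\mathbb{Z}$ (Lemma \ref{lemma:center}) absorbs the scalars from $\mathbb{Z}$.

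Next I would check the hypotheses of Theorem \ref{thm:cayleyhamiltonrings}. The module $V$ is generated over $\mathbb{Z}$ by $m$ elements (the standard basis vectors $e_1, \dots, e_m$, where $e_i$ generates the $i$-th factor $\mathbb{Z}/p^i\mathbb{Z}$), so we may take $n = m$. For the ideal we take $I = \mathbb{Z}$ itself, so that trivially $\phi_A(V) \subseteq IV = V$. Theorem \ref{thm:cayleyhamiltonrings} then yields $a_0, \dots, a_{m-1} \in \mathbb{Z}$ with
\begin{equation*}
\phi_A^m + a_{m-1}\phi_A^{m-1} + \cdots + a_1 \phi_A + a_0 = 0
\end{equation*}
in $\End_{\mathbb{Z}}(V)$. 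Transporting this identity through the isomorphism $\End_{\mathbb{Z}}(V) \cong E_p^{(m)}$ and rearranging gives $A^m = -a_0 - a_1 A - \cdots - a_{m-1} A^{m-1}$, which is the desired relation after renaming the coefficients (here a constant $c \in \mathbb{Z}$ acts on $E_p^{(m)}$ as the central element $\psi(c \bmod p^m)$ from Lemma \ref{lemma:center}, so the constant term makes sense as an element of $E_p^{(m)}$).

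The only genuinely delicate point is the bookkeeping around the isomorphism $E_p^{(m)} \cong \End_{\mathbb{Z}}(V)$: one must be careful that the generating set of $V$ really has $m$ elements and not more (so that the degree of the relation is $m$, not something larger), and that scalars $a_i \in \mathbb{Z}$ obtained from Cayley--Hamilton map to legitimate elements of $E_p^{(m)}$ in a way compatible with the ring structure — this is exactly what Lemma \ref{lemma:center} and the identification $\mathbb{Z} \twoheadrightarrow Z(E_p^{(m)})$ provide. Everything else is a direct substitution, so the proof should be short once the identification is invoked. I expect the main obstacle, if any, to be purely expository: stating precisely how the generalization of \citep[Theorem 5]{climent2011arithmetic} identifies $E_p^{(m)}$ with $\End_{\mathbb{Z}}(V)$ and how matrix powers correspond to iterated endomorphisms, rather than any real difficulty in the argument itself.
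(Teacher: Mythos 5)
Your proposal is correct and follows essentially the same route as the paper's own proof: both apply Theorem \ref{thm:cayleyhamiltonrings} with $R=I=\mathbb{Z}$ and $M=\mathbb{Z}/p\mathbb{Z}\times\cdots\times\mathbb{Z}/p^m\mathbb{Z}$ (so $n=m$), viewing $A$ as a $\mathbb{Z}$-module endomorphism via the identification $E_p^{(m)}\cong\End_{\mathbb{Z}}(M)$. Your extra care about the sign change and about how integer scalars land in $Z(E_p^{(m)})$ is sound bookkeeping that the paper leaves implicit.
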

\begin{proof}
In Theorem \ref{thm:cayleyhamiltonrings}, set $I=R= \mathbb{Z}$ and $M= \mathbb{Z}/p\mathbb{Z} \times \cdots \times \mathbb{Z}/p^m\mathbb{Z}$, hence $n=m$ and $\phi$ is a matrix in $\epm$. It follows now immediately that $\mathbb Z[\phi]$ has dimension less than or equal to $m$ (as a $\mathbb Z$-module). 
\end{proof}
\begin{remark}
Notice that in the statement and the proof of Corollary \ref{cor:cayley}, $\mathbb Z$ could as well be replaced by $\mathbb Z/p^m \mathbb Z$ since any element in $p^m\mathbb Z$ acts as the zero morphism over $M$.
\end{remark}

\begin{lemma}
Let $M \in \epm$. Then the map $\psi : \left ( \vZ/p^m\vZ \right )[x] \rightarrow  H(M)$ given by $\psi(f(x)) = f(M)$ is a surjective $\vZ/p^m\vZ$-algebra homomorphism. \label{lem:Hm}
\end{lemma}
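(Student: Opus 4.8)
The plan is to verify directly that $\psi$ is a well-defined $\vZ/p^m\vZ$-algebra homomorphism and then argue surjectivity from the definition of $H(M)$ together with Lemma \ref{lemma:center}. First I would check well-definedness: the polynomial ring $\left(\vZ/p^m\vZ\right)[x]$ is the free $\vZ/p^m\vZ$-algebra on one generator, so evaluation at any fixed element $M$ of any $\vZ/p^m\vZ$-algebra is automatically a $\vZ/p^m\vZ$-algebra homomorphism; here we need $\epm$ (hence its subring $H(M)$) to be a $\vZ/p^m\vZ$-algebra, which follows because $p^m$ annihilates every entry of a matrix in $\epm$ (all entries live in rings $\vZ/p^i\vZ$ with $i \le m$), so the unique ring map $\vZ \to \epm$ factors through $\vZ/p^m\vZ$; composing with $\psi$ from Lemma \ref{lemma:center} realizes $Z(\epm)$ as the image of $\vZ/p^m\vZ$, and this scalar action commutes with $M$, so $H(M)$ is closed under it. The universal property then gives that $f(x) \mapsto f(M)$ is a $\vZ/p^m\vZ$-algebra homomorphism onto its image.

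Next I would identify the image with $H(M)$. By definition $H(M) = \{\sum_{i=0}^k C_i M^i \mid C_i \in Z(\epm),\, k \in \vN\}$. Using the ring isomorphism $\psi^{-1}\colon Z(\epm) \xrightarrow{\sim} \vZ/p^m\vZ$ of Lemma \ref{lemma:center}, each central coefficient $C_i$ equals $c_i \cdot 1$ for a unique $c_i \in \vZ/p^m\vZ$ acting as a scalar, so $\sum_i C_i M^i = g(M)$ where $g(x) = \sum_i c_i x^i \in \left(\vZ/p^m\vZ\right)[x]$. Thus every element of $H(M)$ is $\psi(g)$ for some polynomial $g$, giving surjectivity; conversely any $\psi(f) = f(M)$ has its coefficients times powers of $M$ lying in $H(M)$ since scalars from $\vZ/p^m\vZ$ correspond to central matrices, so the image is exactly $H(M)$, confirming the codomain is correct.

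The only genuinely delicate point is making sure that the scalar action of $\vZ/p^m\vZ$ used in the polynomial ring matches, under the evaluation map, the central elements $C_i$ appearing in the definition of $H(M)$ — i.e. that ``central coefficients'' and ``$\vZ/p^m\vZ$-coefficients'' really are the same data. This is precisely what Lemma \ref{lemma:center} provides: the center is isomorphic to $\vZ/p^m\vZ$ as a ring, and the constant polynomials in $\left(\vZ/p^m\vZ\right)[x]$ map under $\psi$ onto $Z(\epm)$. I would state this identification explicitly and then the rest is the routine universal-property bookkeeping. I do not expect any obstacle beyond this; the verification that $\psi$ respects addition and multiplication is immediate from the fact that evaluation of polynomials at a fixed ring element is always a ring homomorphism, and $\vZ/p^m\vZ$-linearity is the statement that constants map to the center.
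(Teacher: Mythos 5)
Your proof is correct and follows essentially the same route as the paper's: well-definedness comes from identifying $Z(\epm)$ with $\vZ/p^m\vZ$ via Lemma \ref{lemma:center}, and surjectivity from rewriting each central coefficient $C_i$ in $\sum_i C_i M^i$ as a scalar $c_i \in \vZ/p^m\vZ$, so that every element of $H(M)$ is $g(M)$ for some polynomial $g$. The only difference is presentational: you spell out why $\epm$ is a $\vZ/p^m\vZ$-algebra (the map $\vZ \to \epm$ factors through $\vZ/p^m\vZ$ since $p^m$ kills every entry), a point the paper leaves implicit by instead invoking the identification of $\epm$ with $\End_{\vZ/p^m\vZ}(\vZ/p\vZ \times \cdots \times \vZ/p^m\vZ)$.
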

\begin{proof}
First, using Lemma \ref{lemma:center}  one can identify the center of $\epm$ with $\vZ/p^m\vZ$, from which  follows that the map is well defined.
To see that $\psi$ is a surjective homomorphism, it is enough to consider $\epm$ as $\End_{\mathbb Z/p^m \mathbb Z}(\mathbb{Z}/p\mathbb{Z} \times \cdots \times \mathbb{Z}/p^m\mathbb{Z})$ and to notice that $H(M)\cong (\vZ/p^m\vZ)[M]$:  in fact, using Lemma \ref{lemma:center}, there exist $u_0,\ldots,u_{k} \in \vZ/p^m\vZ$ such that each $C_i$  of a matrix in $H(M)$ can be written as the diagonal matrix with entries $(u_i \mod p, u_i \mod p^2, \ldots, u_i \mod p^m)$. 

%This implies that $A$ can be written as $\sum_{i=0}^{k} u_i M^i$. Hence $A = \psi\left (\sum_{i=0}^{k} u_i x^i \right )$.
%It is easy to see that $\psi$ is a $\vZ/p^m\vZ$-algebra homomorphism, i.e. $\psi(f+g) = \psi(f) + \psi(g), \psi(fg) = \psi(f)\psi(g)$ and $\psi(uf) = u\psi(f)$ for every $f,g \in \left ( \vZ/p^m\vZ \right )[x]$ and $u \in \vZ/p^m\vZ$.
%
%For surjectivity, let $A \in H(M)$, i.e. $A = \sum_{i=0}^{k} C_i M^i$ for some $C_0,\ldots,C_{k} \in Z(\epm)$.  
\end{proof}

%\begin{proposition}
%Let $M \in \epm$. Then $H(M)$ is isomorphic to $\left(\vZ/p^m\vZ\right)^m$ as additive groups. \label{prop:Hm}
%\end{proposition}
%\begin{proof}
%Let $A \in H(M)$. Using Corollary \ref{cor:cayley} we write $A = \sum_{i=0}^{m-1} C_i M^i$ for some $C_0,\ldots,C_{m-1} \in Z(\epm)$.  Using Lemma \ref{lemma:center}, there exist $u_0,\ldots,u_{m-1} \in \vZ/p^m\vZ$ such that each $C_i$ can be written as the diagonal matrix with entries $(u_i \mod p, u_i \mod p^2, \ldots, u_i \mod p^m)$. This implies that $A$ can be uniquely represented by the vector $(u_0,\ldots,u_{m-1}) \in \left(\vZ/p^m\vZ\right)^m$, and can be written as $A = \sum_{i=0}^{m-1} u_i M^i$. Moreover, it is easy to see that this assignment is bijective and closed under addition. 
%\end{proof}

\begin{proposition}
Let $M,X \in \epm $  and $G_A = A_1 X A_2$ for some $A_1,A_2 \in H(M)$. Then there exists $\lambda_{11}, \lambda_{12}, \ldots ,\lambda_{mm} \in \vZ/p^m\vZ$ such that $G_A = \sum_{i,j = 0 }^{m-1} \lambda_{ij} M^i X M^j $. \label{proposition:break}
\end{proposition}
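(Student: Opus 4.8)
The plan is to reduce $A_1$ and $A_2$ to $\vZ/p^m\vZ$-linear combinations of the powers $M^0, M^1, \dots, M^{m-1}$ with central coefficients, and then simply expand the product $A_1XA_2$. First I would use the definition of $H(M)$ (equivalently, the surjectivity in Lemma \ref{lem:Hm}) to write $A_1 = \sum_{i=0}^{k_1} C_i M^i$ and $A_2 = \sum_{j=0}^{k_2} D_j M^j$ with all $C_i, D_j \in Z(\epm)$; a priori the degrees $k_1, k_2$ may exceed $m-1$. To cut the degrees down, I would invoke Corollary \ref{cor:cayley} together with the Remark following it: over $\vZ/p^m\vZ$ the power $M^m$ lies in the $\vZ/p^m\vZ$-span of $1, M, \dots, M^{m-1}$, and hence by an immediate induction so does every $M^k$ with $k \geq m$. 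Using Lemma \ref{lemma:center} to identify $Z(\epm)$ with $\vZ/p^m\vZ$ as rings, this shows there exist $\alpha_0, \dots, \alpha_{m-1}, \beta_0, \dots, \beta_{m-1} \in \vZ/p^m\vZ$ (acting through the center) with $A_1 = \sum_{i=0}^{m-1}\alpha_i M^i$ and $A_2 = \sum_{j=0}^{m-1}\beta_j M^j$.

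The remaining step is a direct computation. Since every $\alpha_i$ and $\beta_j$ is central in $\epm$, they commute with $X$ and with all powers of $M$, so
\[
G_A = A_1 X A_2 = \Bigl(\sum_{i=0}^{m-1}\alpha_i M^i\Bigr) X \Bigl(\sum_{j=0}^{m-1}\beta_j M^j\Bigr) = \sum_{i,j=0}^{m-1} \alpha_i \beta_j\, M^i X M^j .
\]
Setting $\lambda_{ij} = \alpha_i\beta_j$, where the product is taken in $\vZ/p^m\vZ$ via the isomorphism of Lemma \ref{lemma:center}, yields exactly the claimed expression with $\lambda_{ij} \in \vZ/p^m\vZ$.

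I do not expect a genuine obstacle; the proof is essentially bookkeeping. The only points that require care are, first, that the degree truncation via Cayley–Hamilton is carried out over $\vZ/p^m\vZ$ rather than over $\vZ$ — which is precisely what the Remark after Corollary \ref{cor:cayley} permits, so that the reduced coefficients still lie in $\vZ/p^m\vZ$ — and second, that the coefficients appearing are central, which is what makes the rearrangement in the displayed equation valid and guarantees that the resulting scalars $\lambda_{ij}$ are honest elements of $\vZ/p^m\vZ$ rather than of some larger commutative subring.
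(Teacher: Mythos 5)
Your proposal is correct and follows essentially the same route as the paper: both combine Lemma \ref{lem:Hm} with Corollary \ref{cor:cayley} (and the identification of $Z(\epm)$ with $\vZ/p^m\vZ$ from Lemma \ref{lemma:center}) to write $A_1$ and $A_2$ as degree-at-most-$(m-1)$ polynomials in $M$ with coefficients in $\vZ/p^m\vZ$, then expand $A_1XA_2$ and set $\lambda_{ij}$ equal to the product of the corresponding coefficients. You merely make the Cayley--Hamilton degree truncation and the centrality of the coefficients more explicit than the paper does, which is fine.
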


\begin{proof}
Combining Lemma \ref{lem:Hm} and Corollary \ref{cor:cayley}, we can write $A_1 = \sum_{i=0}^{m-1} u_i M^ i$ and $A_2 = \sum_{i=0}^{m-1} v_i M^i$ for some  $u_0,\ldots,u_{m-1},v_0,\ldots,v_{m-1} \in \zpm$. Then \begin{eqnarray*}
G_A & = & A_1 X A_2 \\
 & = & \left ( \sum_{i=0}^{m-1} u_i M^i  \right ) X \left ( \sum_{j=0}^{m-1} v_j M^j \right ) \\
 & = & \sum_{i=0}^{m-1} \sum_{j=0}^{m-1} u_i v_j M^i X M^j \\
 & = & \sum_{i=0}^{m-1} \sum_{j=0}^{m-1} \lambda_{ij} M^i X M^j, 
\end{eqnarray*} for $ \lambda_{ij} = u_iv_j$.
\end{proof}

% 
%We use the following lemma in order to solve a system of linear equations over $\vZ/p^i\vZ$, which was presented in \cite{micheli2018cryptanalysis}.
%
%\begin{lemma}[Lemma 14, \cite{micheli2018cryptanalysis}] \label{lemma_solve_sys} Let $\mathcal{R}=\vZ/p^i\vZ$ for some $i\in \vN$. 
%Let $B \in \Mat_{\ell \times h}(\mathcal{R})$ with $\ell \leq h$, and $c \in \mathcal{R}^{\ell}$. 
%%Let $k$ be the order of nilpotence of $\mathcal R$.
%The set of solutions of 
%\begin{equation}\label{cong}
%B \mu = c
%\end{equation}
%is either empty or there exists an $h\times h$ matrix $P$ and $\bar{\mu} \in \mathcal{R}^h$ such that all solutions have the form 
%\begin{equation*}
%\{ \bar{\mu} +P\lambda \mid  \lambda \in \mathcal{R}^h \}.
%\end{equation*}
% Also, $P$ and $\bar{\mu}$ can be found in polynomial time. 
%\end{lemma}
%
%\pagebreak \kk[inline]{The following subsection shows the recent approach towards solving any kind of linear system over $\epm$. \\ (Of course the I will organise it in a nicer way and also we can argue about the name ``pseudo-$\epm$")}
%\vw[inline]{Maybe we even leave out the Lemma 12 and Algorithm 1 and its running time. I like the name pseudo Epm :) }
\subsection{The pseudo-$\epm$ Approach}
In this subsection we solve the linear system given in Proposition \ref{proposition:break} by defining the following $\zpm$-modules:
\begin{definition}
Let $\fpm \subseteq \Mat_{m \times m}(\zpm)$ be the following set of matrices.
\[\fpm = \left \lbrace (a_{ij})_{i,j \in \lbrace 1,\ldots, m \rbrace} \mid a_{ij} \in p^{\ell}\zpm \mbox{ where } \ell = \max(m-i,m-j) \right \rbrace. \]
\end{definition}
It is easy to check that $\fpm$ is a $\zpm$-submodule of $\Mat_{m \times m}(\zpm)$. Moreover the following proposition shows that $\fpm$ is isomorphic to $\epm$ as $\zpm$-modules, where the scalar multiplication in $\epm$ is as follows: for $r \in \zpm$ and $[a_{i,j}] \in \epm$  we have that
\begin{align*}
r[a_{i,j}] = [ra_{i,j}] = \left( r a_{i,j} \mod p^i \right)_{i,j \in \{1, \ldots, m\}}.
\end{align*}
\begin{proposition}
$\epm$ is isomorphic to $\fpm$ as $\zpm$-modules. \label{proposition:pseudo_epm}
\end{proposition}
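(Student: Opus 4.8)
The plan is to write down an explicit $\zpm$-linear map between the two modules and check it is a well-defined bijection. The natural candidate sends $[a_{ij}] \in \epm$ to the matrix whose $(i,j)$ entry is $p^{\max(m-i,\,m-j)} a_{ij}$, where one first lifts $a_{ij}$ to an integer and then reads the product modulo $p^m$. First I would verify that this is well defined: if $i \le j$ then $a_{ij} \in \vZ/p^i\vZ$ and $\max(m-i,m-j) = m-i$, so $p^{m-i} a_{ij}$ is independent of the chosen lift modulo $p^m$; if $i > j$ then $a_{ij} \in p^{i-j}\vZ/p^i\vZ$, so writing $a_{ij} = p^{i-j} b$ with $b \in \vZ/p^j\vZ$ and noting $\max(m-i,m-j) = m-j$, the entry becomes $p^{m-j} b$, again well defined modulo $p^m$. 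In both cases the resulting entry lies in $p^{\max(m-i,m-j)}\zpm$, so the image lands in $\fpm$.

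Next I would check $\zpm$-linearity. Additivity is immediate because addition in $\epm$ reduces the $(i,j)$ entry modulo $p^i$, and multiplying by $p^{m-i}$ (or $p^{m-j} \le p^{m-i}$ when $i>j$) absorbs exactly the discarded part $p^i \zpm$ into $p^m\zpm = 0$; so $p^{\ell}(a_{ij}+b_{ij} \bmod p^i) = p^{\ell} a_{ij} + p^{\ell} b_{ij}$ in $\zpm$. Compatibility with scalars $r \in \zpm$ works the same way, since $r[a_{ij}] = [r a_{ij} \bmod p^i]$ and again the modular reduction is invisible after multiplying by $p^{\ell}$.

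For bijectivity I would argue entrywise. Fix a position $(i,j)$ and let $\ell = \max(m-i,m-j)$; in the case $i \le j$ we have $\ell = m-i$ and the source component is $\vZ/p^i\vZ$, which has $p^i$ elements, while the target component is $p^{m-i}\zpm$, also of size $p^i$, and the map $a \mapsto p^{m-i} a$ between them is a bijection. In the case $i > j$ we have $\ell = m-j$, the source component $p^{i-j}\vZ/p^i\vZ$ has $p^j$ elements, the target component $p^{m-j}\zpm$ has $p^j$ elements, and $p^{i-j} b \mapsto p^{m-j} b$ is a bijection. Since the map is a $\zpm$-module homomorphism that is bijective on every coordinate, it is an isomorphism of $\zpm$-modules. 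I do not expect a serious obstacle here; the only thing requiring care is the bookkeeping of which of $m-i$, $m-j$ is the max in each triangular region and confirming the cardinalities match, which is exactly what makes the entrywise maps bijections rather than merely injections or surjections.
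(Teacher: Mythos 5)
Your overall strategy (an explicit entrywise rescaling map, checked to be well defined, $\zpm$-linear, and bijective entry by entry) is exactly the paper's, but the map you actually define is not the right one, and as written it fails to be an isomorphism. You send $[a_{ij}]$ to the matrix with $(i,j)$ entry $p^{\max(m-i,m-j)}a_{ij} \bmod p^m$. On the strictly lower-triangular entries ($i>j$) this multiplies by $p^{m-j}$; since there $a_{ij}=p^{i-j}b$, the resulting entry is $p^{m-j}\cdot p^{i-j}b=p^{m+i-2j}b$, not $p^{m-j}b$ as you assert. That value ranges only over $p^{\min(m,\,m+i-2j)}\zpm$, a proper subgroup of $p^{m-j}\zpm$ whenever $i>j$, and it vanishes identically once $i\ge 2j$. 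Concretely, for $m=2$ the $(2,1)$ entry of any matrix in $E_p^{(2)}$ is mapped to $p\cdot(pb)\equiv 0 \pmod{p^2}$, so the matrices with $(2,1)$ entry $0$ and $p$ (all other entries $0$) have the same image: your map is neither injective nor surjective onto $\fpm$. The slip occurs exactly at ``the entry becomes $p^{m-j}b$'': that is the value of a different map from the one you defined, and your entrywise cardinality/bijection check is carried out for that other map.

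The fix is small and recovers the paper's proof: scale by the row index only, i.e.\ define $\delta([a_{ij}])=\left(p^{m-i}a_{ij}\right)_{i,j\in\{1,\dots,m\}}$. For $i\le j$ the entry lies in $p^{m-i}\zpm=p^{\max(m-i,m-j)}\zpm$, and for $i>j$, writing $a_{ij}=p^{i-j}b$, the entry equals $p^{m-j}b\in p^{m-j}\zpm$, exactly the required component of $\fpm$; well-definedness, $\zpm$-linearity (the reduction mod $p^i$ in $\epm$ is absorbed because the scaling factor is always $p^{m-i}$), and your entrywise counting argument then go through verbatim. (Also note that your parenthetical ``$p^{m-j}\le p^{m-i}$ when $i>j$'' has the inequality backwards; with the corrected map the scaling exponent is uniformly $m-i$ and this issue disappears.)
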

\begin{proof} One can easily check that the following map is an isomorphism
between $\epm$ and $\fpm$
\begin{eqnarray*}
\delta: \quad \epm & \to &  \fpm \\
\left ( a_{ij} \right)_{i,j\in \{1,\dots, m\}} & \mapsto & \left( a_{ij} p^{m-i} \right)_{i,j\in \{1,\dots, m\}}.
\end{eqnarray*}
%This map is clearly well-defined.
% since if $i \leq j$ an entry of $[a_{i,j}]$ is given by $a_{i,j} \mod p^i$, when multiplying this with $p^{m-i}$, we have $a_{i,j}  \cdot p^{m-i} \mod p^m  \in p^{m-i}\zpm =p^\ell \zpm $, 
% and if $i>j$ an entry of $[a_{i,j}]$ is given by $p^{i-j}\bar{a}_{i,j} \mod p^i$, when multiplying this with $p^{m-i}$, we have $\bar{a}_{i,j}  \cdot p^{m-i+i-j} \mod p^m  \in p^{m-j}\zpm = p^\ell \zpm$. 
%To shorten the notation we will write $[a_{ij}]=(a_{ij})_{i,j\in \{1,\dots, m\}}$. 

%With this it becomes clear, that $\delta$ is a $\zpm$ linear map. 
%i.e. 
%\begin{align*}
%\delta \left( [a_{i,j}] + [b_{i,j}] \right) & =  \delta \left( [a_{i,j}+ b_{i,j} \mod p^i] \right) \\  & = [p^{m-i}(a_{i,j}+ b_{i,j} \mod p^i)] \\ & =[p^{m-i}a_{i,j} \mod p^i] +[p^{m-i} b_{i,j} \mod p^i]  \\
%&= \delta([a_{i,j}]) + \delta([b_{i,j}]).
%\end{align*}
%The map $\delta$ is clearly surjective and injective.
%since for any $[a_{i,j}]= [\left( p^\ell \bar{a}_{i,j} \mod p^m\right)_{i,j \in \{1, \ldots, m\}} \in \fpm$, we can find $[b_{i,j}] \in \epm$, such that $\delta([b_{i,j}] = [a_{i,j}]$: in the case where $i \leq j$ we have that
%$$p^\ell \bar{a}_{i,j} = p^{m-i} \bar{a}_{i,j} \mod p^m,$$ 
%thus for $[\bar{a}_{i,j}] \in \epm$ we have that $\delta( [\bar{a}_{i,j}]) = [a_{i,j}]$. 
%In the case of $i> j$ we have that
%$$p^\ell\bar{a}_{i,j}= p^{m-j}\bar{a}_{i,j} \mod p^m, $$
%thus for $[p^{i-j} \bar{a}_{i,j}] \in \epm$ we have that $\delta( [p^{i-j} \bar{a}_{i,j}]) = [a_{i,j}]$. 

\end{proof}
\begin{corollary}
Let $M_1,M_2,\ldots,M_k,G \in \epm$. Then $\lambda = (\lambda_1,\lambda_2,\ldots,\lambda_k) \in \left (\zpm\right )^k$ is a solution for $\sum_{i=1}^k \lambda_i M_i = G$ if and only if $\sum_{i=1}^k \lambda_i \delta(M_i) = \delta(G)$. \label{cor:pseudo_epm}
\end{corollary}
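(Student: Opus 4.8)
The plan is to deduce Corollary \ref{cor:pseudo_epm} directly from Proposition \ref{proposition:pseudo_epm} by exploiting that $\delta$ is not merely a $\zpm$-module isomorphism but is moreover \emph{linear and injective}, so it preserves and reflects every $\zpm$-linear relation. Concretely, I would first record that since $\delta$ is a $\zpm$-module homomorphism, for any $\lambda_1,\dots,\lambda_k \in \zpm$ and any $M_1,\dots,M_k \in \epm$ one has $\delta\!\left(\sum_{i=1}^k \lambda_i M_i\right) = \sum_{i=1}^k \lambda_i \delta(M_i)$, where on the left the scalar multiplication and sum are those of $\epm$ (as made explicit just before Proposition \ref{proposition:pseudo_epm}) and on the right they are the ordinary operations in $\Mat_{m\times m}(\zpm)$ restricted to $\fpm$.

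Given this, the forward direction is immediate: if $\sum_{i=1}^k \lambda_i M_i = G$ in $\epm$, apply $\delta$ to both sides and use linearity to get $\sum_{i=1}^k \lambda_i \delta(M_i) = \delta(G)$ in $\fpm$. For the converse, suppose $\sum_{i=1}^k \lambda_i \delta(M_i) = \delta(G)$; by linearity this reads $\delta\!\left(\sum_{i=1}^k \lambda_i M_i\right) = \delta(G)$, and since $\delta$ is injective (being an isomorphism by Proposition \ref{proposition:pseudo_epm}) we conclude $\sum_{i=1}^k \lambda_i M_i = G$ in $\epm$. That closes the equivalence.

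I do not expect a genuine obstacle here; the only point that deserves a line of care is checking that the $\zpm$-scalar multiplication on $\epm$ appearing in the statement is exactly the one with respect to which $\delta$ was shown to be a homomorphism — i.e. that the reduction $(r a_{ij} \bmod p^i)$ on the $\epm$ side corresponds under $\delta$ to the componentwise product $r\cdot(a_{ij}p^{m-i})$ in $\zpm$, which holds because multiplying $a_{ij}p^{m-i}$ by $r$ and reducing mod $p^m$ is the same as reducing $r a_{ij}$ mod $p^i$ and then multiplying by $p^{m-i}$. Once that compatibility is in hand, the corollary is a one-line application of linearity plus injectivity, so the write-up can be kept very short.
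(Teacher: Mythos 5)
Your argument is correct and is essentially the paper's own proof: the paper likewise deduces the corollary by applying the $\zpm$-module isomorphism $\delta$ of Proposition \ref{proposition:pseudo_epm} to both sides of $\sum_{i=1}^k \lambda_i M_i = G$, with linearity giving one direction and injectivity the other. Your extra remark verifying that the scalar action on $\epm$ is compatible with $\delta$ is a reasonable (if brief) addition, but it does not change the route.
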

\begin{proof}
The proof follows directly by applying the isomorphism $\delta$ to both sides of the equation $\sum_{i=1}^k \lambda_i M_i = G$.
%Let $\lambda = (\lambda_1,\lambda_2,\ldots,\lambda_k) \in \zpm^k$ be a solution for $\sum_{i=1}^k \lambda_i M_i = G$. Then using the isomorphism $\delta$ we have 
%\begin{eqnarray*}
%\sum_{i=1}^k \lambda_i M_i = G & \Leftrightarrow & \delta\left ( \sum_{i=1}^k \lambda_i M_i \right ) = \delta (G) \\
%& \Leftrightarrow & \sum_{i=1}^k \lambda_i \delta(M_i) = \delta(G)
%\end{eqnarray*}
\end{proof}

Now we are ready for the main result.

\begin{theorem}\label{break_DHDP}
DHDP protocol over $\epm$ can be broken in polynomial time.
\end{theorem}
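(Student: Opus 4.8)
The plan is to give an explicit polynomial-time algorithm that, from the public data $M,X,G_A,G_B\in\epm$ of Protocol \ref{protocol:DHDP} alone, recovers the shared secret $B_1G_AB_2=A_1G_BA_2$ — crucially without ever recovering the private elements $A_1,A_2\in H(M)$ or $B_1,B_2\in\Cen(M)$ themselves. First I would set up a linear system: by Corollary \ref{cor:cayley} together with the subsequent remark, every power of $M$ lies in the $\zpm$-span of $I,M,\dots,M^{m-1}$, so Proposition \ref{proposition:break} guarantees that there exist $\lambda_{ij}\in\zpm$ with $\sum_{i,j=0}^{m-1}\lambda_{ij}M^iXM^j=G_A$. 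The attacker computes the $m^2$ matrices $N_{ij}:=M^iXM^j\in\epm$ and regards this identity as a linear system in the $m^2$ unknowns $\lambda_{ij}$.

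This is exactly the point where the attack of \cite{zhang} fails: the rows of a matrix in $\epm$ carry different moduli, so one cannot simply read the entries off over $\zpm$. Instead I would push the system through the $\zpm$-module isomorphism $\delta$ of Proposition \ref{proposition:pseudo_epm} into $\fpm\subseteq\Mat_{m\times m}(\zpm)$: by Corollary \ref{cor:pseudo_epm} the system $\sum_{i,j=0}^{m-1}\lambda_{ij}\delta(N_{ij})=\delta(G_A)$ over $\zpm$ is equivalent to the original, hence consistent. One then solves it over the finite principal ring $\zpm$ — for instance by reducing the $m^2\times m^2$ coefficient matrix to Howell (row echelon) form — to obtain \emph{some} solution $(\mu_{ij})$, which by Corollary \ref{cor:pseudo_epm} again satisfies $\sum_{i,j=0}^{m-1}\mu_{ij}M^iXM^j=G_A$ in $\epm$; note $(\mu_{ij})$ need not coincide with $(\lambda_{ij})$.

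Next I would output $K:=\sum_{i,j=0}^{m-1}\mu_{ij}M^iG_BM^j$ and verify it is the shared secret. Since $B_1,B_2\in\Cen(M)$, they commute with every power of $M$, so
$$K=\sum_{i,j=0}^{m-1}\mu_{ij}M^iB_1XB_2M^j=B_1\Bigl(\sum_{i,j=0}^{m-1}\mu_{ij}M^iXM^j\Bigr)B_2=B_1G_AB_2,$$
which is exactly the value Bob (hence, by the agreement established under Protocol \ref{protocol:DHDP}, also Alice) computes. The argument uses only consistency of the system and the relation $B_1,B_2\in\Cen(M)$; in particular it is completely insensitive to \emph{which} solution $(\mu_{ij})$ the linear-algebra step returns, which is what makes the attack succeed even though $A_1,A_2$ are never recovered.

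For the running time: computing $I,M,\dots,M^{m-1}$ and then all $N_{ij}=M^iXM^j$ costs $O(m^5)$ $\zpm$-operations ($O(m^2)$ products of $m\times m$ matrices, each $O(m^3)$ after reducing modulo $p^m$); applying $\delta$ and solving the resulting $m^2\times m^2$ system over $\zpm$ costs $O((m^2)^3)=O(m^6)$ $\zpm$-operations; and forming $K$ costs another $O(m^5)$. The total is $O(m^6)$ $\zpm$-operations, which is polynomial, so the DHDP protocol — and, by \citep[Theorem 4]{climent2016public}, the equivalent EGDP protocol — is broken. The step requiring the most care is the linear-algebra step: one must argue that linear systems over the non-field ring $\zpm$ are solvable in polynomial time, and — the conceptual heart of the matter — that the system has to be transported through $\delta$ into $\fpm$ rather than collapsed naively over $\zpm$, since it is the $\zpm$-module structure of $\epm$, not an entrywise reading, that makes the system consistent.
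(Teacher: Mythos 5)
Your proposal is correct and follows essentially the same route as the paper: express $G_A$ as $\sum_{i,j}\lambda_{ij}M^iXM^j$ via Proposition \ref{proposition:break}, transport the linear system through $\delta$ using Corollary \ref{cor:pseudo_epm}, solve over $\zpm$, and recover the secret from commutativity of $B_1,B_2$ with powers of $M$, with the same $O(m^6)$ count. Your explicit remark that any solution $(\mu_{ij})$ of the system suffices is a nice clarification of a point the paper leaves implicit, but it is not a different argument.
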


\begin{proof}

Let $M,X \in \epm$ such that $MX \neq XM$, and let $A_1,A_2 \in H(M)$ and $B_1,B_2 \in \Cen(M)$. Given $M,X, G_A = A_1 X A_2$ and $G_B = B_1 X B_2$, we have to find $A_1G_BA_2$.

Using Proposition \ref{proposition:break}, we know that there exist $\lambda_{11}, \ldots ,\lambda_{mm} \in \vZ/p^m\vZ$ such that $G_A = \sum_{i,j = 0 }^{m-1} \lambda_{ij} M^i X M^j $. We use Corollary \ref{cor:pseudo_epm} to solve this system of linear equations for $\lambda_{11},\lambda_{12} \ldots ,\lambda_{mm}$.  Then the exchanged secret is given by
\begin{eqnarray*}
\sum_{i,j=0}^{m-1} \lambda_{ij} M^i G_B M^j & = & \sum_{i,j=0}^{m-1} \lambda_{ij} M^i B_1 X B_2 M^j \\
& = &   \sum_{i,j=0}^{m-1} \lambda_{ij} B_1 M^i X M^j  B_2 \\
& = &   B_1 \left ( \sum_{i,j=0}^{m-1} \lambda_{ij} M^i X M^j \right ) B_2 \\
& = &  A_1 G_B A_2  =  B_1 G_A B_2.  
\end{eqnarray*} 

Algorithm \ref{break} provides a formal way to solve the DHDP protocol over $\epm$. 

\end{proof}

%\kk[inline]{Also check out the Algorithm based on this approach on the next page}
\begin{algorithm}
\caption{Break protocol based on DHDP over $E_p^{(m)}$ using pseudo-$\epm$}\label{break}
\begin{flushleft}
Input: $M,X,G_A,G_B \in E_p^{(m)}$\\ 
Output: the exchanged secret $A_1G_BA_2 \in E_p^{(m)}$
\end{flushleft}
\begin{algorithmic}[1]
\State Construct the matrix of linear equations arising from $A_1XA_2 = G_A$ using Proposition \ref{proposition:break}, given by 
\begin{equation*} G_A = \sum_{i,j =0}^{m-1} \lambda_{ij} M^iXM^j,
\end{equation*} where $\lambda_{ij}$'s are unknown.
\State Apply the $\zpm$-module isomorphism $\delta$ mentioned in Lemma \ref{proposition:pseudo_epm} to the above equation \begin{equation*} \delta(G_A) = \sum_{i,j =0}^{m-1} \lambda_{ij} \delta(M^iXM^j).
\end{equation*}
\State Solve the system of $m^2$ linear equations in $m^2$ unknowns over $\zpm$, generated by equating entries of above matrix equality. By Proposition \ref{proposition:break} and Corollary \ref{cor:pseudo_epm}, such a solution exists.
\State Return $\sum_{i,j =0}^{m-1}\lambda_{ij} M^iG_BM^j$.
\end{algorithmic}
\end{algorithm}

\textbf{Running time.} The running time of Algorithm \ref{break} is given by solving $m^2$ linear equations in $m^2$ unkowns over $\zpm$, which costs  $\mathcal{O}((m^2)^3)$ $\mathbb Z/p^m\mathbb Z$-operations, or  $\mathcal{O}(m^8 \log(p)^2)$ bit operations. 
%\subsection{Implementation of the attack}\label{sec:implementation}
In \cite{climent2014extension}, Climent et. al. proposed to use the DHDP protocol and EGDP protocol for the parameters $p=2$ and $m=128$. In our implementation, Algorithm \ref{break} took 23.1 days to break these parameters. The results were obtained by a MAGMA \cite{MR1484478} implementation using a personal computer with processor Intel Core 6C i7-8700K at 3.7 GHz and 64 GB RAM (see \url{www.math.uzh.ch/aa/uploads/media/attack_CLR.txt}). 
%
%\textcolor{red}{ONLY THING LEFT}
%\begin{remark}
%Actual instances are broken in this time
%\end{remark}

\subsection{A  $ 2 \times 2$ example}\label{example}
In the following we provide an example, which serves two purposes, first it shows the Algorithm \ref{break} in practice and second it provides an example where the claimed attack in \cite{zhang} does not work.

Let $m=2, p=5$
and let $M = \begin{bmatrix}
4 & 3 \\15 & 20
\end{bmatrix}$ and $ \ X = \begin{bmatrix}
 0 & 4  \\ 15 & 4
\end{bmatrix}$ be public elements.

Alice chooses \[A_1 = \begin{bmatrix}
1 & 3 \\ 15& 17
\end{bmatrix} \quad A_2 = \begin{bmatrix}
0 & 3 \\15& 11
\end{bmatrix}\] and publishes $G_A= \begin{bmatrix}
0 & 1  \\ 20& 23
\end{bmatrix}$. Bob chooses \[B_1 = \begin{bmatrix}
3 & 3 \\ 15& 9 
\end{bmatrix} \quad B_2 = \begin{bmatrix}
3 & 0 \\ 0 & 18
\end{bmatrix}\] and publishes $G_B= \begin{bmatrix}
0 & 2 \\ 5 & 3
\end{bmatrix}$. The shared secret is then \[A_1G_BA_2 = B_1G_AB_2 = \begin{bmatrix}
 0 & 1 \\ 15& 21
 \end{bmatrix}.\]

The attacker sees only $M,X,G_A,G_B$
% \begin{equation*}
%M= \begin{bmatrix}
%1 & 1 \\ 0& 2 
%\end{bmatrix}, \ X = \begin{bmatrix}
% 1 & 1  \\ 2 & 1 
%\end{bmatrix}, \ G_A= \begin{bmatrix}
%0 & 0  \\ 2 & 0
%\end{bmatrix}, \ G_B= \begin{bmatrix}
%0 & 1 \\ 0 & 1
%\end{bmatrix},
%\end{equation*}
and wants to find $A_1 G_B A_2 \in E_5^{(2)}$. \\ In Step 1 of Algorithm \ref{break}, the attacker constructs

\begin{align}\label{examplesys} G_A &= \sum_{i,j =0}^{m-1} \lambda_{ij} M^iXM^j \nonumber \\ & = \lambda_{00} \begin{bmatrix} 0 & 4 \\15 & 4 \end{bmatrix}
 +\lambda_{01} \begin{bmatrix} 10& 5 \\ 20& 0 \end{bmatrix} + \lambda_{10} \begin{bmatrix} 20& 3 \\ 0 & 15\end{bmatrix} + \lambda_{11} \begin{bmatrix} 0 & 20\\ 0 & 0 \end{bmatrix}.
\end{align}
In the second step the attacker applies $\delta$ getting

\begin{equation*} \delta(G_A)= \begin{bmatrix} 0 & 5 \\ 20& 23\end{bmatrix}  =  \lambda_{00} \begin{bmatrix} 0 & 20 \\15 & 4 \end{bmatrix}
 +\lambda_{01} \begin{bmatrix} 0 & 0 \\ 20& 0 \end{bmatrix} + \lambda_{10} \begin{bmatrix} 0 & 15 \\ 0 & 15\end{bmatrix} + \lambda_{11} \begin{bmatrix} 0 & 0\\ 0 & 0 \end{bmatrix}.
\end{equation*}

From this we get the system of $m^2$ linear equations in $m^2$ unknowns $\lambda_{ij}$, but since we applied $\delta$ these are now equations over $\zpm$:
\begin{align*} 
0 \lambda_{00} + 0 \lambda_{01} + 0 \lambda_{10} +0 \lambda_{11}& =0, \\
20 \lambda_{00} + 0 \lambda_{01} + 15\lambda_{10} + 0 \lambda_{11}& =5, \\
15 \lambda_{00} + 20\lambda_{01} + 0 \lambda_{10} + 0 \lambda_{11}& =20,\\
4 \lambda_{00} + 0 \lambda_{01} + 15\lambda_{10} + 0 \lambda_{11}& =23.
\end{align*}
%
%which can also be written in the form 
%\begin{equation*}
%\begin{bmatrix}  
%2 & 2 & 2 & 2 \\ 
%2 & 2 & 0 & 2 \\
%2 & 2 & 0 & 0 \\
%1 & 0 & 2 & 0 
%\end{bmatrix} \begin{bmatrix}
%\lambda_{00} \\ \lambda_{01} \\ \lambda_{10} \\ \lambda_{11}
%\end{bmatrix} = \begin{bmatrix}
%0 \\ 0 \\ 2 \\ 0
%\end{bmatrix}.
%\end{equation*}

One particular solution of this system over $\zpm$ is given by 
$$ \left(\lambda_{00}, \lambda_{01}, \lambda_{10}, \lambda_{11}\right) = \left(
2, 22, 1, 0
\right). $$
The attacker now computes

\begin{align*}
 \sum_{i,j=0}^{m-1} \lambda_{ij} M^i G_B M^j & = 2 G_B + 22 G_B M + 1 M G_B + 0 MG_B M \\ 
 &=\begin{bmatrix}
0 & 1 \\ 15 & 21 
\end{bmatrix}  = A_1G_B A_2.
\end{align*}

Compared to our attack, the approach presented in \cite{zhang} does not make use of Lemma \ref{lemma:center}, Lemma \ref{lem:Hm} and Corollary \ref{cor:pseudo_epm}. Instead the elements of $Z(\epm)$ are seen as diagonal matrices having entries $(a_0,a_0+pa_1,\ldots,a_0 + p a_1 + \cdots + p^{m-1}a_{m-1})$ where $0 \leq a_0,a_1,\ldots,a_{m-1} \leq p-1$. Using this representation and Cayley-Hamilton theorem results in a linear system over $\epm$ of $m^2$ equations in $m^3$ unknowns, as mentioned in Theorem 2 of \cite{zhang}. 
One should observe that this system does not necessarily admit a solution over $\zpm$, which is the approach used in \cite{zhang}. The above mentioned example provides an instance where this approach fails.

Using the approach in \cite{zhang}, the analogue of equation \eqref{examplesys}  is 
\begin{align*}\label{examplesys} G_A &= \sum_{i,j =0}^{m-1} W_{ij} M^iXM^j, \nonumber %\\ & = \begin{bmatrix}
%a_0^{00} & 0 \\ 0 & a_0^{00} + 5 a_1^{00}
%\end{bmatrix} \begin{bmatrix} 0 & 4 \\15 & 4 \end{bmatrix}
% +\begin{bmatrix}
%a_0^{01} & 0 \\ 0 & a_0^{01} + 5 a_1^{01}
%\end{bmatrix} \begin{bmatrix} 10& 5 \\ 20& 0 \end{bmatrix} \\ & \qquad + \begin{bmatrix}
%a_0^{10} & 0 \\ 0 & a_0^{10} + 5 a_1^{10}
%\end{bmatrix} \begin{bmatrix} 20& 3 \\ 0 & 15\end{bmatrix} + \begin{bmatrix}
%a_0^{11} & 0 \\ 0 & a_0^{11} + 5 a_1^{11}
%\end{bmatrix} \begin{bmatrix} 0 & 20\\ 0 & 0 \end{bmatrix}.
\end{align*}
where $W_{ij} = \begin{bmatrix}
a_0^{ij} & 0 \\ 0 & a_0^{ij} + 5 a_1^{ij}
\end{bmatrix} \in Z(\epm)$. This results in the following system of linear equations:
\begin{align*} 
0 a_0^{00} + 0 a_1^{00} + 0 a_0^{01} +0 a_1^{01} +  0 a_0^{10}  + 0 a_1^{10} + 0 a_0^{11} + 0 a_1^{11}& = 0  \mod 5, \\
4 a_0^{00} + 0 a_1^{00} + 0 a_0^{01} + 0 a_1^{01} + 15 a_0^{10}  + 0 a_1^{10} + 0 a_0^{11} + 0 a_1^{11}& = 5 \mod  5,\\
15 a_0^{00} + 0 a_1^{00} + 20 a_0^{01} + 0 a_1^{01} +  0 a_0^{10}  + 0 a_1^{10} + 0 a_0^{11} + 0 a_1^{11}& = 20 \mod 25, \\
4 a_0^{00} + 20 a_1^{00} + 0 a_0^{01} +0 a_1^{01} +  15 a_0^{10}  + 5 a_1^{10} + 0 a_0^{11} + 0 a_1^{11}& = 23 \mod 25.\\
\end{align*}
In Section 4 of \cite{zhang}, the author claims that it is enough to consider this system over $\vZ/25 \vZ$. However in this example the claim does not hold and shows that the approach used in \cite{zhang} to solve a linear system over $\epm$ is incorrect.

\section*{Acknowledgment}
We would like to thank Alessandro Neri for fruitful discussions and Juan-Antonio  L{\'o}pez-Ramos and Joan-Josep Climent for bringing \cite{zhang}  to our knowledge. The authors would also like to thank the anonymous reviewers for their helpful comments.
The second author is thankful to the Swiss National Science Foundation under grant number 171249. This work has also been supported partly by the Swiss National Science Foundation under grant number 169510.

\bibliographystyle{unsrt}
\bibliography{biblio}

\end{document}